\def\@footnotecolor{blue}
\patchcmd{\@footnotemark}{\hyper@linkstart{link}}{\hyper@linkstart{footnote}}{}{}
\let\save@mathaccent\mathaccent
\newcommand*\if@single[3]{%
  \setbox0\hbox{${\mathaccent"0362{#1}}^H$}%
  \setbox2\hbox{${\mathaccent"0362{\kern0pt#1}}^H$}%
  \ifdim\ht0=\ht2 #3\else #2\fi
  }
\newcommand*\rel@kern[1]{\kern#1\dimexpr\macc@kerna}
\newcommand*\widebar[1]{\@ifnextchar^{{\wide@bar{#1}{0}}}{\wide@bar{#1}{1}}}
\newcommand*\wide@bar[2]{\if@single{#1}{\wide@bar@{#1}{#2}{1}}{\wide@bar@{#1}{#2}{2}}}
\newcommand*\wide@bar@[3]{%
  \begingroup
  \def\mathaccent##1##2{%
    \let\mathaccent\save@mathaccent
    \if#32 \let\macc@nucleus\first@char \fi
    \setbox\z@\hbox{$\macc@style{\macc@nucleus}_{}$}%
    \setbox\tw@\hbox{$\macc@style{\macc@nucleus}{}_{}$}%
    \dimen@\wd\tw@
    \advance\dimen@-\wd\z@
    \divide\dimen@ 3
    \@tempdima\wd\tw@
    \advance\@tempdima-\scriptspace
    \divide\@tempdima 10
    \advance\dimen@-\@tempdima
    \ifdim\dimen@>\z@ \dimen@0pt\fi
    \rel@kern{0.6}\kern-\dimen@
    \if#31
      \overline{\rel@kern{-0.6}\kern\dimen@\macc@nucleus\rel@kern{0.4}\kern\dimen@}%
      \advance\dimen@0.4\dimexpr\macc@kerna
      \let\final@kern#2%
      \ifdim\dimen@<\z@ \let\final@kern1\fi
      \if\final@kern1 \kern-\dimen@\fi
    \else
      \overline{\rel@kern{-0.6}\kern\dimen@#1}%
    \fi
  }%
  \macc@depth\@ne
  \let\math@bgroup\@empty \let\math@egroup\macc@set@skewchar
  \mathsurround\z@ \frozen@everymath{\mathgroup\macc@group\relax}%
  \macc@set@skewchar\relax
  \let\mathaccentV\macc@nested@a
  \if#31
    \macc@nested@a\relax111{#1}%
  \else
    \def\gobble@till@marker##1\endmarker{}%
    \futurelet\first@char\gobble@till@marker#1\endmarker
    \ifcat\noexpand\first@char A\else
      \def\first@char{}%
    \fi
    \macc@nested@a\relax111{\first@char}%
  \fi
  \endgroup
}
\renewcommand\section{\@startsection{section}{1}%
  \z@{.7\linespacing\@plus\linespacing}{.5\linespacing}%
  {\large\scshape\centering}}
\renewcommand\subsection{\@startsection{subsection}{2}%
  \z@{.5\linespacing\@plus.7\linespacing}{-.5em}%
  {\large\bfseries}}
\begin{document}

\title{A Topological proof that $O_2$ is $2$-MCFL}
\author{Subhadip Chowdhury}
\address{Department of Mathematics \\ University of Chicago \\
Chicago,  Illinois,  60637}
\email{subhadip@math.uchicago.edu}
\date{\today}

\begin{abstract}
We give a new proof of Salvati's theorem that the group language $O_2$ is $2$ multiple context free. Unlike Salvati's proof, our arguments do not use any idea specific to two-dimensions. This raises the possibility that the argument might generalize to $O_n$.
\end{abstract}

\maketitle
\setcounter{tocdepth}{1}
\tableofcontents


	\newtheorem{theorem}{Theorem}[section]
	\newtheorem{lemma}[theorem]{Lemma}
	\newtheorem{corollary}[theorem]{Corollary}
	\newtheorem{conjecture}[theorem]{Conjecture}
	\newtheorem{proposition}[theorem]{Proposition}
	\newtheorem{question}[theorem]{Question}
	\newtheorem{problem}[theorem]{Problem}
	\newtheorem*{thm_2D_derivable}{Theorem~\ref{thm:2D_derivable}}
	\newtheorem*{claim}{Claim}
	\newtheorem*{criterion}{Criterion}
	\theoremstyle{definition}
	\newtheorem{definition}[theorem]{Definition}
	\newtheorem{construction}[theorem]{Construction}
	\newtheorem{notation}[theorem]{Notation}
	\newtheorem{convention}[theorem]{Convention}
	\newtheorem*{warning}{Warning}

	\theoremstyle{remark}
	\newtheorem{remark}[theorem]{Remark}
	\newtheorem{example}[theorem]{Example}
	\newtheorem{scholium}[theorem]{Scholium}
	\newtheorem*{case}{Case}
	
	\newcommand\id{\textnormal{id}}
	\newcommand\Z{\mathbb Z}
	\newcommand\R{\mathbb R}
	\newcommand\RP{\mathbb RP}
	\newcommand\T{\mathbb T}
	\newcommand\D{\mathbb D}
	\newcommand\B{\mathbb B}
	\newcommand\X{\times}
	\newcommand\inv{\textnormal{Inv}}
	\newcommand{\perm}{\textnormal{perm}}
	\newcommand{\genus}{\textnormal{genus}}
	\newcommand{\homeo}{\textnormal{Homeo}}
	\newcommand{\rot}{\textnormal{rot}}
	\newcommand{\Hom}{\textnormal{Hom}}
	\newcommand{\SL}{\textnormal{SL}}
	\newcommand{\fr}{\textnormal{fr}}
	\newcommand{\vf}[1]{\frac{1}{#1}}
	\newcommand{\lar}{\leftarrow}
	\newcommand{\la}{\langle}
	\newcommand{\ra}{\rangle}
	\newcommand{\Circ}{\textnormal{Circ}}
	\renewcommand{\d}{\partial}
	\renewcommand{\i}{\iota}
	
	\newcommand{\mf}{\mathfrak}
	\newcommand{\mc}{\mathcal}
	\renewcommand{\d}{\partial}
	\renewcommand{\v}{\vec{v}}
	\newcommand{\Lk}{\textup{Lk}}
	\renewcommand{\phi}{\varphi}
	
\section{Introduction}
In a recent paper\cite{salvati}, Sylvain Salvati proved that the language 
 \[O_2 = \{ w \in \{a,b,A,B\}^* \mid  |w|_a = |w|_A \wedge |w|_b = |w|_B \}\]
and hence the rationally equivalent language 
 \[\textup{MIX} = \{ w \in \{a,b,c\}^* \mid |w|_a = |w|_b = |w|_c\}\] 
are $2-$Multiple Context Free Languages as follows. Consider the grammar \[G=(\Omega,\mathcal{A},R,S)\] where $\Omega=\left(\{S;\inv\},\rho\right)$ with $\rho(S)=1$ and $\rho(\inv)=2$; $\mathcal{A}=\{a,A,b,B\}$ and $R$ consists of the following rules:
	\begin{enumerate}
		\item $S(x_1x_2) \lar \inv(x_1,x_2)$
		\item $\inv(t_1,t_2)\lar  \inv(x_1,x_2)$ where $t_1t_2\in \perm(x_1x_2aA)\cup \perm(x_1x_2bB)$
		\item $\inv(t_1,t_2)\lar  \inv(x_1,x_2),\inv(y_1,y_2)$ where $t_1t_2\in \perm(x_1x_2y_1y_2)$
		\item $\inv(\epsilon,\epsilon)$
	\end{enumerate} 

Note that by construction it is easy to prove that the language $L=\{w\mid S(w) \text{ is derivable in }G\}$ generated by $G$, is a subset of $O_2$. Using ideas specific to two-dimensional geometry, e.g. complex exponential function, Salvati then proves that 

\begin{thm_2D_derivable}[Salvati]
        If $w_1w_2\in O_2$, then $\inv(w_1,w_2)$ is derivable and hence $w_1w_2$ will be in the language $L=\{w|S(w) \text{ is derivable in }G\}$ generated by $G$.
\end{thm_2D_derivable}

Theorem \ref{thm:2D_derivable} along with the previous description of the grammar $G$ thus proves that $L=O_2$. In this paper, we give a different proof of the same theorem using ideas from homology theory that are not specific to two dimensions. This raises the possibility that the proof might be generalized to higher dimensional cases, thus shedding light on the (still open) question whether $O_n$ is a $n-$MCFL.
 
\section{Acknowledgement}
I would like to thank Danny Calegari, my advisor, for sharing with me Salvati's paper \cite{salvati} which directly inspired the main result of this paper; and for his continued support and guidance, as well as for the extensive comments and corrections on this paper. I would also like to thank Mark-Jan Nederhof for sharing his paper \cite{MJN} and other related works to this problem and Greg Kobele for some useful discussion and comments. 

\section{A generalization of Salvati's Theorem in $2-$dimension}

A word in $O_2$ corresponds uniquely to a closed path in the integer lattice $\Z^2$ as follows. $a$ and $b$ correspond to $\rightarrow$ and $\uparrow$, and $A,B$ to $\leftarrow$ and $\downarrow$ respectively. e.g. the word $abbAbaBaBBBAbA\in O_2$ corresponds to the path in figure \ref{fig:latticepath}.
\begin{figure}[!h]
\centering
	\begin{tikzpicture}	
	
	\begin{scope}[
	decoration={markings, mark=at position 0.9 with {\arrow{>}}}
	]
	\filldraw (0,0) circle (3pt) node[above left]{$(0,0)$};
	\draw[postaction={decorate}] (0,0) -- (1,0);
	\draw[postaction={decorate}] (1,0) -- (1,1);
	\draw[postaction={decorate}] (1,1) -- (1,2);
	\draw[postaction={decorate}] (1,2) -- (0,2);	\draw[postaction={decorate}] (0,2) -- (0,3);
	\draw[postaction={decorate}] (0,3) -- (1,3);
	\draw[postaction={decorate}] (1,3) -- (1,2);
	\draw[postaction={decorate}] (1,2) -- (2,2);
	\draw[postaction={decorate}] (2,2) -- (2,1);
	\draw[postaction={decorate}] (2,1) -- (2,0);
	\draw[postaction={decorate}] (2,0) -- (2,-1);
	\draw[postaction={decorate}] (2,-1) -- (1,-1);
	\draw[postaction={decorate}] (1,-1) -- (1,0);
	\draw[postaction={decorate}] (1,0) -- (0,0); 
	
	\end{scope}
	
	\end{tikzpicture}
	\caption{Lattice path corresponding to the word $abbAbaBaBBBAbA$}
	\label{fig:latticepath}
\end{figure}

We prove a purely topological result about closed curves on the plane (not necessarily simple) which, as a corollary, gives us theorem $\ref{thm:2D_derivable}$.

Consider a closed (oriented) loop $K$ in $\R^2$ given by 
\[\phi: S^1\to \R^2,\quad Image(\phi)=:K\]

It makes no difference to set the domain of $\phi$ equal to $[0,1]$ and assume $\phi(0)=\phi(1)$. Let $p=\phi(0), q=\phi(1/2)$ and let $r$ and $s$ be two arbitrary points on $K$. Together the four points $\{p,q,r,s\}$ break up $K$ into $4$ (possibly degenerate) arcs $K_1,K_2,K_3$ and $K_4$ such that the starting point of $K_{i+1}$ is the same as the ending point of $K_i$.  Denote by $\v_i$ the vector which is defined as 
\[\v_i=\text{End point of }K_i -  \text{ Starting point of }K_{i}\]
The vectors $\v_i$ satisfy 
\[\vec{v}_1+\vec{v}_2+\vec{v}_3+\vec{v}_4=0\] 
Our main technical result is the following.
\begin{proposition}\label{prop:loop}
 Assume that $\phi$ is differentiable at $p$ and $q$ and $\phi'(0)$ is not antiparallel to $\phi'(1/2)$. Then there exists a pair of $r$ and $s$ on $K$ different from $\{p,q\}$ such that  the set of $4$ vectors $\{\vec{v}_1,\vec{v}_2,\vec{v}_3,\vec{v}_4\}$ can be partitioned into two sets of $2$ vectors each of which sum up to zero.
\end{proposition}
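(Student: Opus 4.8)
My plan is to first rephrase the conclusion geometrically. Writing the four points $p,q,r,s$ in cyclic order along $K$ as $P_1,P_2,P_3,P_4$ so that $\vec v_i=\phi(P_{i+1})-\phi(P_i)$ (indices mod $4$), the three ways to partition $\{\vec v_1,\dots,\vec v_4\}$ into two zero-sum pairs unwind to: $\vec v_1+\vec v_2=0$ iff $\phi(P_1)=\phi(P_3)$; $\vec v_2+\vec v_3=0$ iff $\phi(P_2)=\phi(P_4)$; and $\vec v_1+\vec v_3=0$ iff $\phi(P_1)+\phi(P_3)=\phi(P_2)+\phi(P_4)$, i.e.\ the two diagonals of the quadrilateral $P_1P_2P_3P_4$ share a midpoint. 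I will aim exclusively at this last (``inscribed parallelogram'') condition. The choice $r=p,\ s=q$ realizes it trivially, so the goal becomes: produce a \emph{second}, non-degenerate solution. (If $p=q$ the first condition is automatic; and if $\phi$ revisits $p$ or $q$ one can split $K$ into two shorter loops directly --- so I would dispose of these cases first and henceforth assume $p\neq q$ and $\phi^{-1}(p)=\{0\}$, $\phi^{-1}(q)=\{1/2\}$.)

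Next I would set up a vector field whose zeros are these solutions. The unordered pair $\{r,s\}$ ranges over $\mathcal M:=\operatorname{Sym}^2(S^1)$, which is a Möbius band with boundary the diagonal $\{r=s\}$. On $\mathcal M$ define $\nu(\{r,s\})$ to be the difference of the midpoints of the two diagonals of $P_1P_2P_3P_4$. There is a genuine sign ambiguity here --- which diagonal is ``first'' --- and tracking it around loops shows it flips precisely when a moving point passes the basepoint $\phi(0)$; so $\nu$ is a section of $\mathscr L\oplus\mathscr L$ ($\mathbb R^2$ twisted by a real line bundle $\mathscr L$), where $\mathscr L$ has twist locus the core circle $C_0=\{\{0,t\}:t\in S^1\}$, and hence $w_1(\mathscr L)$ is the nontrivial class of $H^1(\mathcal M;\mathbb Z/2)$. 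I would then check two facts: $\nu\equiv\tfrac12(p-q)\neq 0$ on $\partial\mathcal M$ (one diagonal always degenerates there), and $\mathscr L$ is trivial over $\partial\mathcal M$. Granting these, the number of interior zeros of $\nu$ (after a generic perturbation supported away from $\partial\mathcal M$), counted mod $2$, is the relative Stiefel--Whitney number $\langle w_2(\mathscr L\oplus\mathscr L),[\mathcal M,\partial\mathcal M]\rangle=\langle w_1(\mathscr L)^2,[\mathcal M,\partial\mathcal M]\rangle$, which is the self-intersection number of the core circle of a Möbius band, namely $1$. So $\nu$ has an \emph{odd} number of zeros in $\operatorname{int}\mathcal M$.

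To extract a non-trivial solution from this I would analyze $\nu$ near the trivial zero $\star=\{p,q\}$ --- and this is exactly where the hypothesis enters. Writing the nearby points as $\phi(\varepsilon)$ and $\phi(\tfrac12+\delta)$ and checking the several cyclic arrangements that occur near $\star$, one finds that in a local trivialization of $\mathscr L$ one has $\nu=-\tfrac12\big(|\varepsilon|\,\phi'(0)+|\delta|\,\phi'(\tfrac12)\big)+o(|\varepsilon|+|\delta|)$. Since $\phi'(0)$ is not antiparallel to $\phi'(\tfrac12)$, the cone of nonnegative combinations of these two vectors lies in an open half-plane; so along a small circle about $\star$ the section $\nu$ is nonvanishing with values in that cone, and its local degree is $0$ mod $2$. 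Hence the odd total forces a zero of $\nu$ at some $\{r,s\}\neq\star$. Unwinding the first step, this produces four points in cyclic order whose diagonals share a midpoint, hence the asserted partition of the $\vec v_i$; and one checks that the resulting four arcs really do subdivide $K$ into strictly shorter pieces, which is all that Theorem~\ref{thm:2D_derivable} needs.

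The step I expect to be hardest is the local analysis at $\star$, together with the bookkeeping it rests on: pinning down $\mathscr L$ precisely --- in particular that it is nontrivial, so that $w_1(\mathscr L)^2$ evaluates to $1$ rather than $0$ --- and verifying that the relative mod-$2$ Euler number is indeed the self-intersection of the core circle. Tellingly, if $\phi'(0)$ and $\phi'(\tfrac12)$ were antiparallel the leading term of $\nu$ would vanish along a whole arc through $\star$, the local degree would be uncontrolled, and the argument would break --- matching the fact that the hypothesis is genuinely needed.
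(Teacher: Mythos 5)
Your argument is correct in its essentials, and it reaches the conclusion by a genuinely different route from the paper. The paper first reduces to embedded loops, then builds an explicit $2$-complex $\mc{X}$ with twelve $2$-cells recording both the position of $\{r,s\}$ and the choice of pairing, deletes the two degenerate vertices, and derives a contradiction by exhibiting $1$-cycles in the links (the $\alpha\beta$- and $\gamma\delta$-cycles) that bound in $\mc{Y}$ yet map to $S^1$ with nonzero degree; computing those degrees requires a case analysis of the homotopy types of the two half-arcs of the curve. You instead observe that of the three pairings only the ``opposite arcs'' one ($\vec{v}_1+\vec{v}_3=0$) ever needs to be realized, package it as a section $\nu$ of $\mathscr L\otimes\R^2$ over $\operatorname{Sym}^2(S^1)$, and split the count into a global characteristic-class computation on the M\"obius band plus a local index computation at the trivial solution $\star=\{p,q\}$, which is the only place the non-antiparallel hypothesis enters. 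I verified the computations that carry the load: the sign of $\nu$ flips exactly across $C_0=\{\{0,t\}\}$ and nowhere else (so $w_1(\mathscr L)\neq 0$), $\nu$ is a nonzero constant (namely $\tfrac12(p-q)$) on $\partial\mc{M}$, and the expansion $\nu\approx\pm\tfrac12\bigl(|\varepsilon|\,\phi'(0)+|\delta|\,\phi'(\tfrac12)\bigr)$ holds in all four quadrants around $\star$, giving local index $0$ there; a direct check on the round circle with non-antipodal basepoints confirms the total count is odd. Your approach buys a cleaner statement --- no reduction to embedded curves, no case analysis of the curve's shape, and the hypothesis isolated in one local computation --- while the paper's cell-by-cell bookkeeping keeps every step elementary and explicitly combinatorial.

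The one step you should tighten is the claim that the mod-$2$ zero count equals $\langle w_1(\mathscr L)^2,[\mc{M},\partial\mc{M}]\rangle$: a relative Euler or Stiefel--Whitney number of this kind is not determined by the bundle alone but also by the homotopy class of the nonvanishing boundary section (for trivial $\mathscr L$ the count would be an arbitrary winding number). The two facts you isolate are exactly what rescue it: since $\mathscr L|_{\partial\mc{M}}$ is trivial and $\nu|_{\partial\mc{M}}$ is a nonzero constant in that trivialization, $\nu|_{\partial\mc{M}}$ is homotopic through nonvanishing sections to the boundary restriction of a split section $(\sigma_1,\sigma_2)$ with each $\sigma_i$ a section of $\mathscr L$ nonvanishing on $\partial\mc{M}$; then $Z(\sigma_1)$ and $Z(\sigma_2)$ are closed interior curves, each representing the core class because $w_1(\mathscr L)\neq 0$, and their mod-$2$ intersection number is $1$. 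Stated that way the global count is airtight, and your local computation at $\star$ then forces the required second zero; one should also note (as follows from your standing assumption $\phi^{-1}(p)=\{0\}$, $\phi^{-1}(q)=\{1/2\}$) that $\nu$ has no zeros on the collision loci $\{x=0\}$ and $\{x=1/2\}$ other than $\star$, so the second zero really yields four distinct parameters and hence an admissible pair $\{r,s\}\neq\{p,q\}$.
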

 
The proof of Proposition \ref{prop:loop} proceeds in multiple steps as follows. First we construct a $2$ dimensional cell complex $\mc{X}$ that parametrizes the choices for $r$ and $s$ along with a choice of partition for $\{\vec{v}_1,\vec{v}_2,\vec{v}_3,\vec{v}_4\}$. Then we define a function $f:\mc{X}\to\R^2$ with the property that a zero for $f$ gives a choice of $\{r,s\}$ satisfying the conclusion of proposition \ref{prop:loop}. Finally using a homological argument, we show that $f$ must have a zero.

Before moving on with the proof we make the following observations to simplify the proof. We claim that it suffices consider embedded $K$. Suppose the loop $K$ has at least one self intersection i.e. it's not embedded. Then there exist two points $0\leq t_1<t_2\leq 1$ such that $\phi(t_1)=\phi(t_2)$. Now depending on values of $t_i$, we must have one of the following cases (recall that $p$ and $q$ are distinct):
\begin{enumerate}
    \item $ t_1=0$   and $0<t_2<1/2$
    
    Then we remove $\phi(0,t_2)$ of $K$ to get a new knot $K'$. Clearly if $K'$ satisfies the conclusion of proposition \ref{prop:loop} then so does $K$. So we replace $K$ with this new simplified knot $K'$.
    
    \item $ t_1=0$   and $1/2<t_2<1$
    
    In this case we remove $\phi(t_2,1)$ instead.
    
    \item $ t_2=1/2$   and $0<t_1<1/2$ OR $ t_1=1/2$   and $1/2<t_2<1$ 
    
    Similar to the above two cases.
    
    \item $ 0<t_1<t_2<1/2$ OR $ 1/2<t_1<t_2<1$
    
    Remove $\phi(t_1,t_2)$.
    
    \item $ 0<t_1<1/2<t_2<1$
    
    Then choose $r=\phi(t_1)$ and $s=\phi(t_2)$ so that $\vec{v}_1+\vec{v}_4=\vec{v}_2+\vec{v}_3=0$.
    
\end{enumerate} 
In conclusion, after necessary simplifications, without loss of generality we can assume that $K$ is an embedded loop in $\R^2$.

\subsection{Construction of the $2$ dimensional Cell Complex  $\mc{X}$ and the function $f$}\label{subsec:construction}

In this section, we define the $2$-complex $\mc{X}$ whose points parametrizes choices of $\{r,s\}\neq\{p,q\}$ on $K$ along with the particular choice of partition of $\{\vec{v}_1,\vec{v}_2,\vec{v}_3,\vec{v}_4\}$ into two sets of $2$ vectors each. Note that the choice of partition also defines the function $f$ on $\mc{X}$, but this is a bit complicated since it depends on the relative orders of $\{p,q,r,s\}$ on $K$. Now choosing $r$ and $s$ on $K$ is equivalent to choosing two numbers $x$ and $y$ in the interval $[0,1]$, where $0$ and $1$ are considered the same number. In what follows, we list all the possible $2$-cells in $\mc{X}$ in table \ref{2cells}; along with the $1$-cells and $0$-cells that appear as the boundary of the $2$-cells and parametrize the cases when our choice is degenerate.  This will take up next couple of pages. However, note that for our subsequent sections and to prove the main result, we will be using only some of these cells.

The $2$-cells fall into three categories:
\begin{enumerate}[\textbf{Case} 1.]
	\item $0\leq x\leq 1/2\leq y\leq 1$
	\item $0\leq x\leq y\leq 1/2$
	\item $1/2\leq x\leq y\leq 1$
\end{enumerate}
In each case, we define $f$ in such a way that $f(x,y)=0$ implies existence of a partition of $\{\vec{v}_1,\vec{v}_2,\vec{v}_3,\vec{v}_4\}$ into two sets of two vectors each of which add up to zero. Observe that if $(p,r,s,q)$ appear in that particular order in $K$ and $\vec{v}_2=s-r=0$ (which is equivalent to saying $\vec{v}_1+\vec{v}_3+\vec{v}_4=0$), then that means part of $K$ from $\phi(x)$ to $\phi(y)$ makes a loop. But since we assumed $K$ is embedded, those combination of vector(s) are never zero for any choice of $x$ and $y$ in case $2$. So we can safely add the cells $F$ and $H$ in $\mc{X}$. Similarly, when $p,q,r,s$ appear in that particular order in $K$ we can consider the extra $2$-cells $J$ and $L$.

\newcolumntype{C}[1]{>{\centering\arraybackslash}m{#1}}
\newcolumntype{N}{@{}m{0pt}@{}}

                \caption{Homeomorphic solution space for case $2$}
                \label{fig:soln_space_2b}
            \end{figure}

By abuse of notation, we will denote these homeomorphic $2$-cells by the same name. By symmetry, same thing is true for the four cells $I,J,K,$ and $L$. Then the boundary maps on the $2$-cells  are defined as follows.
        \begin{align*}
            \d(A) &= \widebar{\beta} - \alpha - \widebar{\delta} + \gamma &\d(B) &= \widebar{\alpha} - \beta - \gamma + \widebar{\delta} & & &\\
            \d(C) &= \beta - \widebar{\alpha} - \delta + \widebar{\gamma} & \d(D) &= \alpha - \widebar{\beta} - \widebar{\gamma} + \delta\\
            \d(E) &= \alpha + \beta & \d(F) &= \beta + \alpha &
            \d(G) &= \widebar{\alpha} + \widebar{\beta} & \d(H) &= \widebar{\beta} + \widebar{\alpha}\\
            \d(I) &= \gamma + \delta & \d(J) &= \delta + \gamma &
            \d(K) &= \widebar{\gamma} + \widebar{\delta} & \d(L) &= \widebar{\delta} + \widebar{\gamma}
        \end{align*}

And finally the boundary maps from $1-$cells to $0-$cells are given by

\begin{align*}
            \d(\alpha) &= p_2 - p_1 & \d(\beta) &= p_1 - p_2 &
            \d(\gamma) &= p_3 - p_1 & \d(\delta) &= p_1 - p_3\\
            \d(\widebar{\alpha}) &= p_3 - p_4 & \d(\widebar{\beta}) &= p_4 - p_3 &
            \d(\widebar{\gamma}) &= p_2 - p_4 & \d(\widebar{\delta}) &= p_4 - p_2\\
\end{align*}
        
\begin{figure}[!ht]
                \centering
                \def\svgwidth{\textwidth}
                \input{./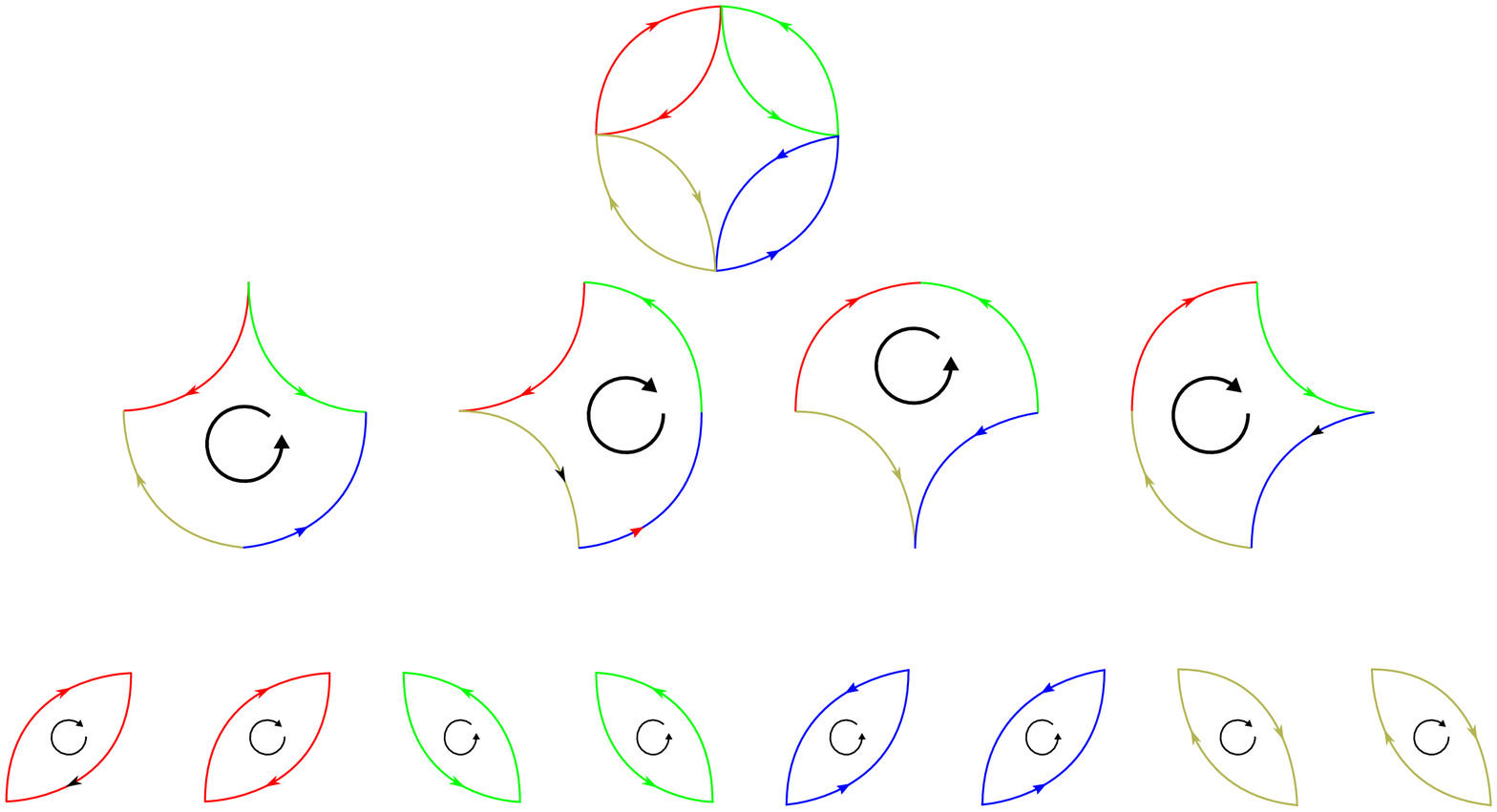_tex}
                \caption{All cells with orientation}
                \label{fig:X2cells}
\end{figure}

Refer to figure $\ref{fig:X2cells}$ to see how all the cells glue together with correct orientation.

\subsection{Taking out some of the vertices of $\mc{X}$ to get a homologically trivial $2$-cycle}

So far we have constructed a $2-$dimensional complex, let's call it $\mc{X}$, with four $0-$cells, eight $1-$cells and twelve $2-$cells. We can picture $\mc{X}$ as in figure $\ref{fig:X2cells}$.

In last section, we defined a continuous vector-valued function from $\mc{X}$ to $\R^2$. Note that $f(p_1)$ and $f(p_4)$ are zero, however they correspond to the degenerate case when $\{p,q\}=\{r,s\}$. So to prove proposition \ref{prop:loop}, we need to show that $f$ has a nontrivial zero on $\mc{X}$ other than $p_1$ and $p_4$. 

Let us assume, for the sake of contradiction, that $f$ is nonzero everywhere else on $\mc{X}$. We remove two circular $\epsilon$-neighborhoods of $p_1$ and $p_4$ from $\mc{X}$ and compactify (in the obvious way) the resulting space to get a new cell complex $\mc{Y}$. Clearly we have
\[\d\mc{Y}= \Lk(p_1,\mc{X}) \cup \Lk(p_4,\mc{Y})\]
where $\Lk(p_1,\mc{X})$ is the link of $p_1$ in $X$. Now $f|_{\mc{Y}}$ is nonzero and so we can replace $f$ by $f/\|f\|$, which by abuse of notation, we will still call $f$, to get a continuous function defined as \[f:\mc{Y} \to S^1.\]
Then we have a commutative diagram
\[\begin{tikzcd}
\d\mc{Y} \arrow[r,hookrightarrow, "i"] \arrow[d , "i\circ f"'] & \mc{Y} \arrow[ld, "f"]\\
S^1 & \\
\end{tikzcd}\] which induces following commutative diagram on the homology of these spaces
\[\begin{tikzcd}
H_1(\d\mc{Y}) \arrow[r] \arrow[d] & H_1(\mc{Y}) \arrow[ld]\\
H_1(S^1)=\Z & \\
\end{tikzcd}\]

Our goal is to show existence of cycles in $H_1(\d\mc{Y})$ that are homologcally trivial in $H_1(\mc{Y})$ but map nontrivially to $H_1(S^1)$. [For a short introduction to the theory of homology, the reader may refer to chapter $2$ of \cite{Hatcher}.]

Recall that $\Lk(p_1,\mc{X})$ (and similarly $\Lk(p_4,\mc{X})$) is a graph whose vertices correspond to the $1-$cells of $\mc{X}$ which are incident to $p_1$ (resp. $p_4$). Two such edges are adjacent in $\Lk(p_1,\mc{X})$ (resp. $\Lk(p_4,\mc{X})$) if they are incident to a common $2-$cell at $p_1$ (resp. $p_4$). From the list of cells above we can then easily see that both the links have has $4$ vertices and $8$ edges (and the two graphs are disjoint from each other).

Now consider the cycle in $\Lk(p_1,\mc{X})$ that consists of the $1-$cell (edge in the graph) corresponding to $E$ minus the $1-$cell corresponding to $F$. Since the $2-$cells clearly bound a sphere in $\mc{X}$, this cycle as an element of $H_1(\mc{Y})$ bounds a disk, and hence is homologically trivial. We will call this the $\alpha\beta-$cycle because of the boundary of the $2-$cells.  Similarly we can construct $\gamma\delta-, \widebar{\alpha}\widebar{\beta}-$ and $\widebar{\gamma}\widebar{\delta}-$cycle in $H_1(\d\mc{Y})$ which map homologically trivially into $H_1(\mc{Y})$. In general we can produce such a cycle whenever a collection of $2-$cells of $\mc{X}$ makes a sphere.

\subsection{Calculating degree of $f$ on cycles of $\d\mc{Y}$}\label{subsec:degreecalc}

We first concentrate on the $\epsilon-$neighbourhood, let's call it  $U$, of the point $p_1$ in $\mc{X}$ that was used to obtained $\Lk(p_1,\mc{X})$ as $\d U$. Similarly, we deonte the corresponding neighbourhood at $p_4$ by $V$. We want to choose $U$ and $V$ sufficiently small such that the following is true.

Suppose $U\cap \alpha = [0,a]$ and $U\cap \delta = [d,1]$. We want the vectors $\vec{u}_\alpha= \phi(a) - \phi(0)$ and $\vec{u}_\delta= \phi(1)-\phi(d)$ approximately of same magnitude and in the same direction (note that $\phi(0)=\phi(1)$). Similarly  if $U \cap \beta=[x,1/2]$ and $U \cap \gamma=[1/2,c]$, then we want $\vec{u}_\beta=\phi(1/2)-\phi(b)$ and $\vec{u}_\gamma=\phi(c)-\phi(1/2)$ to be of approximately same magnitude and direction. Now $f$ is differentiable in a neighbourhood of $p_1$ since $\phi$ is differentiable at $p$ and $q$. So such a choice of $U$ exists. Note that we can simultaneously choose $V$ small enough such that the following are also satisfied. 
 \[\vec{u}_{\widebar{\alpha}}\approx \vec{u}_{\widebar{\delta}},\, \vec{u}_{\widebar{\beta}} \approx \vec{u}_{\widebar{\gamma}}\] and \[\vec{u}_{\widebar{\alpha}}=-\vec{u}_{\alpha},\vec{u}_{\widebar{\beta}}=-\vec{u}_{\beta}, \vec{u}_{\widebar{\gamma}}=-\vec{u}_{\gamma} \text{ and } \vec{u}_{\widebar{\delta}}=-\vec{u}_{\delta}.\]

Let $\theta$ denote the angle from the vectors $\vec{u}_\alpha$ to $\vec{u}_\beta$ and $\theta'$  denote the angle from the vectors $\vec{u}_\gamma$ to $\vec{u}_\delta$ (in clockwise direction, wlog). Since $\vec{u}_\alpha \approx \vec{u}_\delta $ and $\vec{u}_\beta \approx \vec{u}_\gamma$, one of the two angles $\theta$ and $\theta'$ is less than or equal to $\pi$. Without loss of generality let us assume that $\theta<\pi$. Then we know that the angle from $\vec{u}_{\widebar{\delta}}$ to $\vec{u}_{\widebar{\gamma}}$ is also less than $\pi$.

Observe that the vectors $u_\alpha$ and $u_\beta$ are essentially the direction of tangents to the knot $K$ at $0$ and $1/2$. Consequently since $K$ is embedded, up to homotopy, the part of $\phi$ from $0$ to $1/2$ looks like one of the following $3$ possibilities in figure \ref{fig:3homotopycases}. 

\begin{figure}[!ht]
	\centering
	\begin{subfigure}[t]{0.33\textwidth}
		 \centering
		 \includegraphics[scale=0.33]{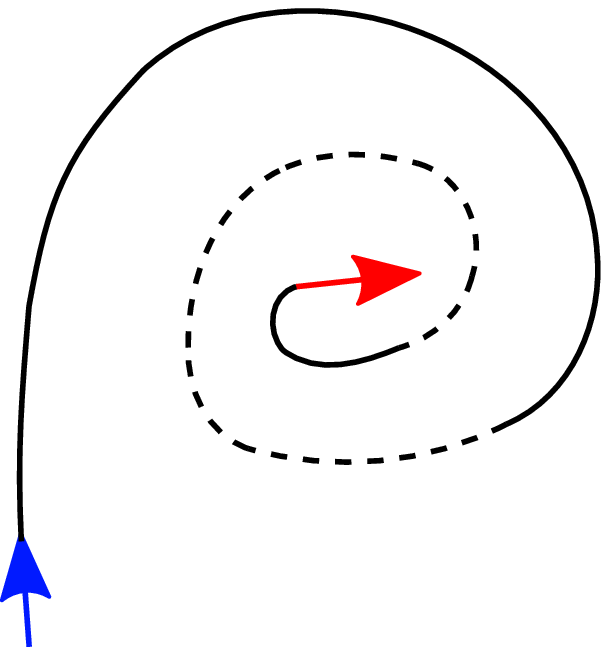}
		 \caption{Case $1$}
	\end{subfigure}%
	\begin{subfigure}[t]{0.33\textwidth}
		 \centering
		 \includegraphics[scale=0.33]{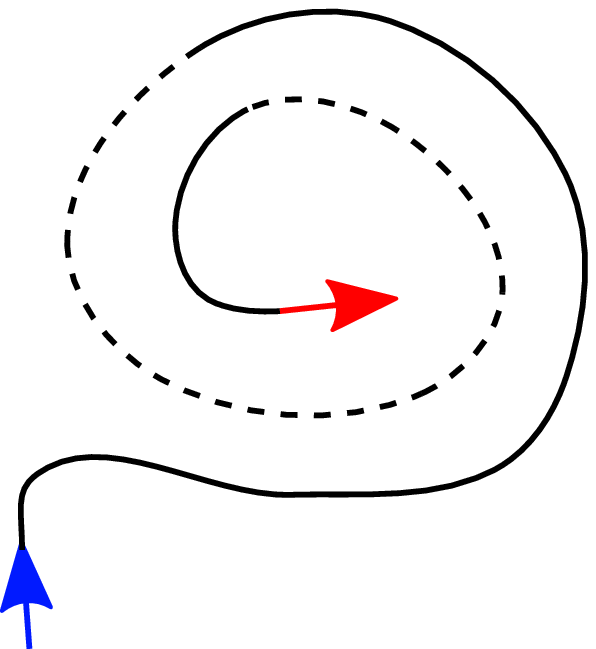}
		 \caption{Case $2$}
	\end{subfigure}%
	\begin{subfigure}[t]{0.33\textwidth}
		\centering
		\includegraphics[scale=0.33]{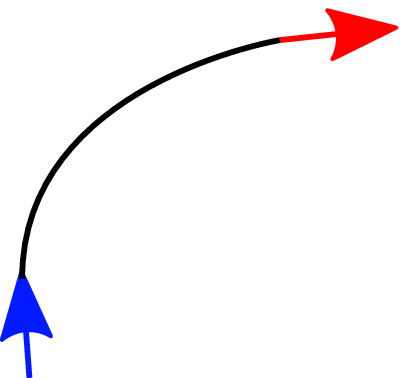}
		\caption{Case $3$}
	\end{subfigure}
	\caption{$3$ Homotopy cases for part of $K$ from $0$ to $1/2$} 
	\label{fig:3homotopycases}
\end{figure}

Let  $p_\alpha, p_\beta, p_\gamma$ and $p_\delta$ denote the vertices in $\Lk(p_1,\mc{X})$ corresponding to the $1-$cells $\alpha,\beta,\gamma$ and  $\delta$ respectively (cf. figure \ref{fig:fancynbhd}).
       
\begin{figure}[!ht]
                \centering
                \def\svgwidth{0.5\textwidth}
                \vspace{20pt}
                \input{./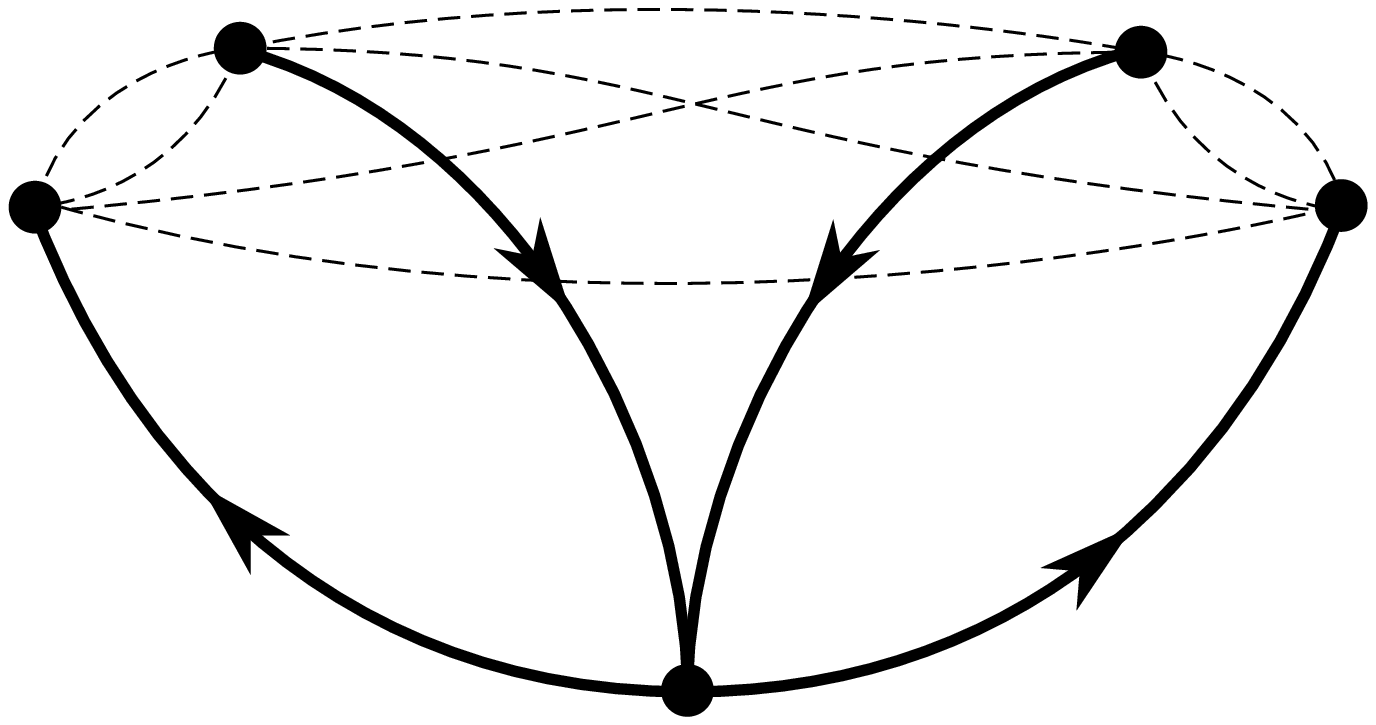_tex}
                \vspace{10pt}
                \caption{The link (dotted part) of the vertex $p_1$ in $\mc{X}$}
                \label{fig:fancynbhd}
\end{figure}
We introduce the notation $[p_\alpha\to p_\beta]$ to denote the edge in from $p_\alpha$ to $p_\beta$ in $\Lk(p_1,\mc{X})$ corresponding to the $2-$cell $F$ (the orientation is derived from $\mc{X}$ and the fact that $E-F$ is a  sphere). Similarly we define $[p_\beta\to p_\alpha]$ to denote the edge corresponding to $E$. Thus the $\alpha\beta-$cycle is equal to $[p_\alpha\to p_\beta]+[p_\beta \to p_\alpha]$.

We wish to calculate the degree of of $f|_{\alpha\beta-cycle}$. Depending on the $3$ cases for the embedding of $\phi|_{[0,1/2]}$ as discussed above, we will get different results.

Let's look at $f|_{[p_\beta\to p_\alpha]}$ first. Note that for small enough $U$ i.e for $x\to 0+$ and $y\to 1/2-$, the function $f$ on $[p_\beta \to p_\alpha]$ is approximately a linear interpolation from $f(p_\beta)=\vec{u}_{\beta}$ to $f(p_\alpha)=\vec{u}_{\alpha}$ and hence the image of $f$ traverses the circle from the direction of $\vec{u}_\beta$ to that of $\vec{u}_\alpha$ (cf. figure \ref{fig:p_betato_p_alpha}).

\begin{figure}[!ht]
	\centering
	\includegraphics[scale=0.45]{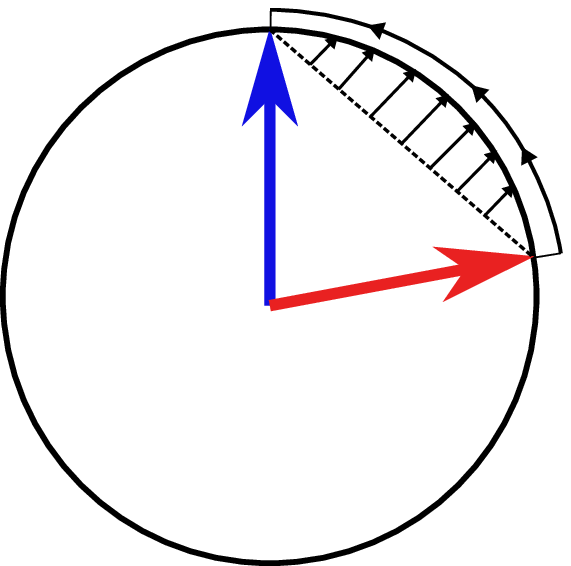}
	\caption{Image of $f$ on $[p_\beta\to p_\alpha]$}
	\label{fig:p_betato_p_alpha}
\end{figure}

The case of $[p_\alpha\to p_\beta]$ is very different because of the the corresponding $2-$cell $F$. Note that by construction, a small neighbourhood of $p_1$ in $F$ looks like \[\{(x,y)\mid y-x\leq \epsilon \} \subset \{0\leq x<y\leq 1/2\}\] for some small $\epsilon>0$ so that $\phi(y)-\phi(x) \approx 0$. So, as we traverse $[p_\alpha \to p_\beta]$, the function $f$ maps continuously from $\phi(\epsilon)-\phi(0)$ to $\phi(1/2)-\phi(1/2-\epsilon)$ through vectors of the form $\phi(x+\epsilon)-\phi(x)$ with $x$ going from $0$ to $1/2-\epsilon$. Thus the winding number of image of $f$ on $[p_\alpha\to p_\beta]$ depends on the embedding of $\phi$ from $0$ to $1/2$. Looking at the $3$ possible cases as in figure \ref{fig:3homotopycases}, we then observe that the image of $f$ looks as in figure \ref{fig:windingcases_AB_123}. Consequently the degree of $f$ on the $\alpha\beta-$cycle can be tabulated as in table \ref{table:degreeAB}.

\begin{figure}[!ht]
	\centering
	\begin{subfigure}[t]{0.33\textwidth}
		\centering
		\includegraphics[scale=0.45]{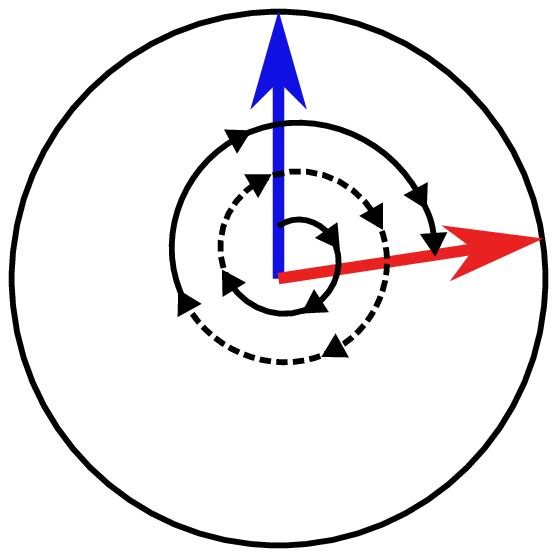}
		\caption{Case $1$}
	\end{subfigure}%
	\begin{subfigure}[t]{0.33\textwidth}
		\centering
		\includegraphics[scale=0.45]{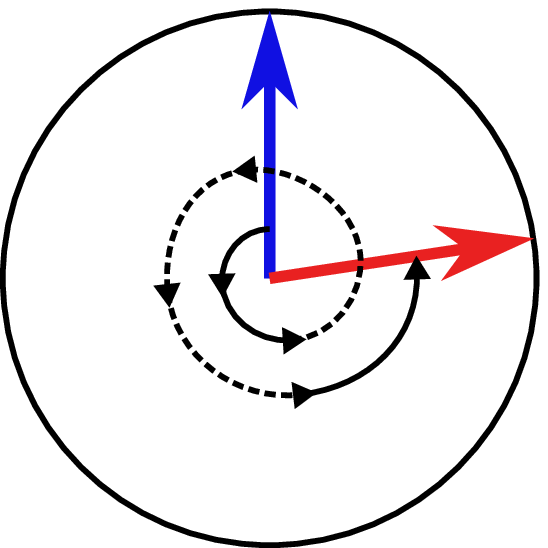}
		\caption{Case $2$}
	\end{subfigure}%
	\begin{subfigure}[t]{0.33\textwidth}
		\centering
		\includegraphics[scale=0.45]{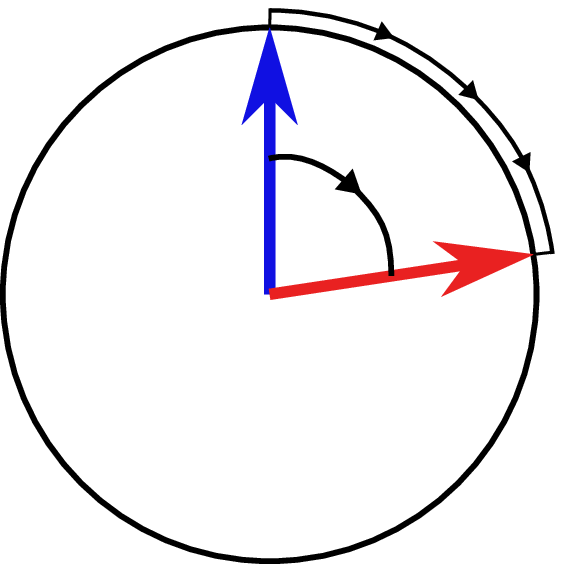}
		\caption{Case $3$}
	\end{subfigure}
	\caption{Image of $f$ in the $3$ cases} 
	\label{fig:windingcases_AB_123}
\end{figure}
\begin{table}[!htb]
	\centering
	\begin{tabular}{|c|c|c|c|}
		\hline \rule[-2ex]{0pt}{5.5ex}  & Case $1$ & Case $2$ & Case $3$ \\ 
		\hline \rule[-2ex]{0pt}{5.5ex} Degree of $f$ on $\alpha\beta-$cycle & $\gneq 0$ & $\lneq 0$ & $=0$ \\ 
		\hline 
	\end{tabular} 
	\caption{}\label{table:degreeAB}
\end{table}

In the case when the degree of $f$ on the $\alpha\beta-$cycle is zero, we look at the possibilities for the part of the knot $\phi$ from $1/2$ to $1$. Again up to homotopy, there are exactly $2$ possible cases as in figure \ref{fig:knotorientations}. In each case, let's calculate the degree of $f$ on the $\gamma\delta-$cycle which depends on the embedding of $\phi$ from $1/2$ to $1$. Then it is easy to see that in both cases, $f|_{\gamma\delta-cycle}$ is non zero.

\begin{figure}[!ht]
	\centering
	\begin{subfigure}[t]{0.5\textwidth}
		\centering
		\includegraphics[scale=0.33]{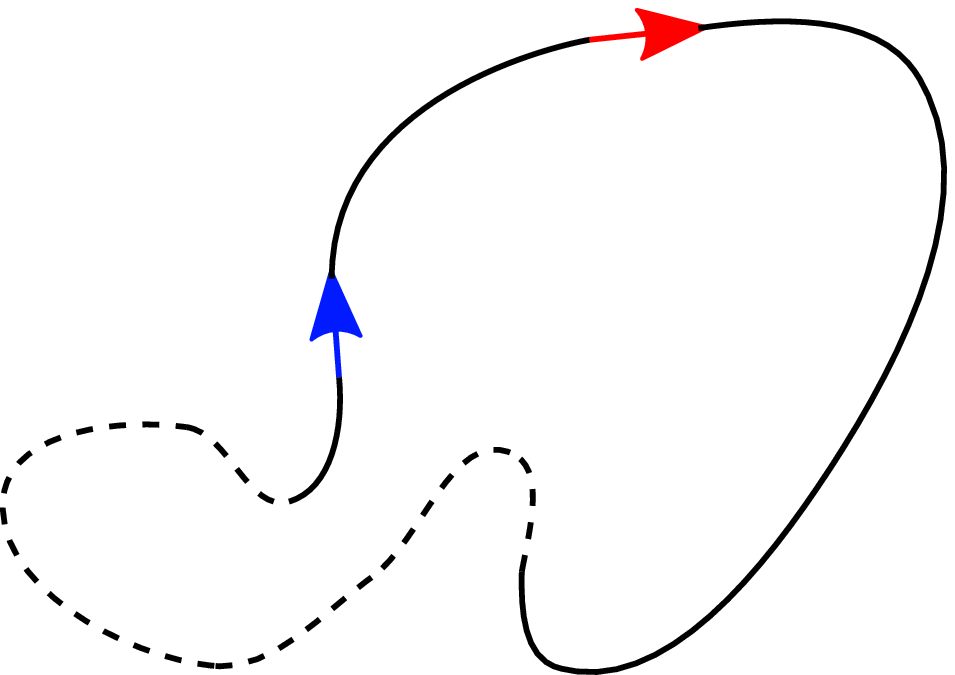}
		\caption{Case $1$}
	\end{subfigure}%
	\begin{subfigure}[t]{0.5\textwidth}
		\centering
		\includegraphics[scale=0.33]{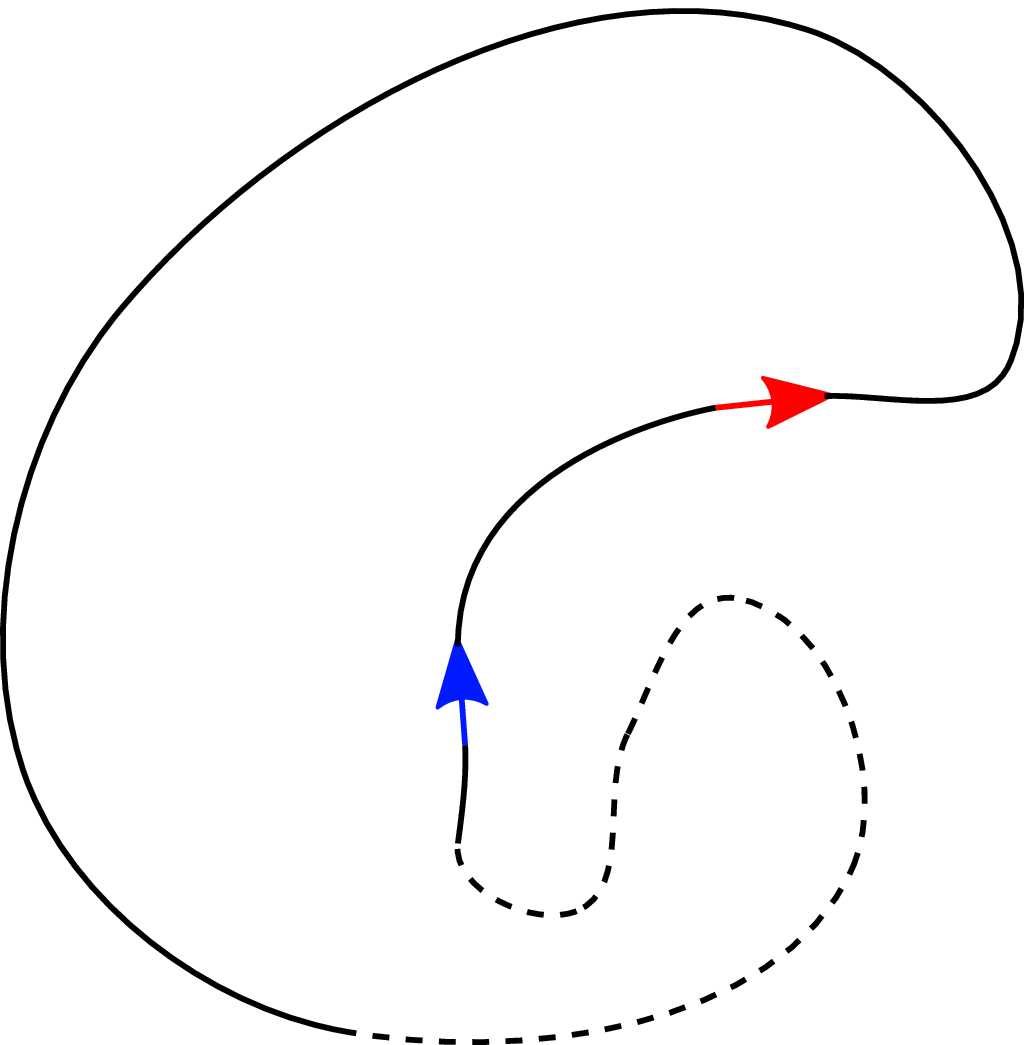}
		\caption{Case $2$}
	\end{subfigure}
	\caption{$2$ Homotopy cases for part of $K$ from $1/2$ to $1$ for case $3$ in figure \ref{fig:3homotopycases}} 
	\label{fig:knotorientations}
\end{figure}

\subsection{Conclusion}

The proof of proposition $\ref{prop:loop}$ now follows from the following contradiction. Since the $\alpha\beta-$cycle and the $\gamma\delta-$cycle are homologically trivial in $\mc{Y}$, the composition \[H_1(\d \mc{Y}) \hookrightarrow H_1(W) \to H_1(S^1)\] is  a zero map. But on the other hand at least one of these cycle maps with nonzero degree to $S^1$ and hence maps nontrivially in homology. Thus our assumption in subsection $\ref{subsec:degreecalc}$ is wrong and $f$ indeed must have a nontrivial zero other than $p_1$ and $p_4$. \hfill $\qed$

\section{Proof of Salvati's Theorem}

Following Salvati, we first prove the following corollary to proposition $\ref{prop:loop}$ which after a easy induction argument will prove theorem $\ref{thm:2D_derivable}$.

\begin{theorem} \label{thm:induction_argument}
        Let $w_1,w_2\in \{a,b,A,B\}^*$ such that $w_1w_2\in O_2$. Then there exists $x_1,x_2,y_1,y_2\in\{a,b,A,B\}^*$ such that 
        \begin{enumerate}
            \item $w_1w_2\in perm(x_1x_2y_1y_2)$, 
            \item $\max\{|x_1x_2|,|y_1y_2|\} < |w_1w_2|$,
            \item $w_1w_2\in perm\{x_1x_2y_1y_2\}$, and
            \item $x_1x_2$ and $y_1y_2$ are both in $O_2$.
        \end{enumerate}
\end{theorem}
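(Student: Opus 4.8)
The plan is to pass from this combinatorial statement to the geometry of Proposition~\ref{prop:loop}. Represent $w_1w_2$ as a closed lattice loop $\Gamma$ in $\Z^2$, as in Figure~\ref{fig:latticepath}, and put $N=|w_1w_2|$; note that only $w:=w_1w_2$, not the individual $w_i$, enters the conclusion. If $N<4$ no $x_i,y_i$ as required can exist ($x_1x_2$ and $y_1y_2$ would both be nonempty elements of $O_2$, hence of length $\geq 2$, with lengths summing to $N$ and each $<N$), and such $w$ are taken care of directly by rules $(2)$ and $(4)$ of $G$ in the induction; so assume $N\geq 4$. Since $\perm$ records only the multiset of letters, and a cyclic rotation of $w$ is a permutation of $w$, it suffices to find four vertices $Q_1,Q_2,Q_3,Q_4$ of $\Gamma$, in cyclic order, whose four arcs $A_1,A_2,A_3,A_4$ have displacement vectors $v_1,v_2,v_3,v_4$ that split into two pairs summing to $0$, with each pair of positive total length (automatically $<N$). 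For if, say, $v_1+v_3=0=v_2+v_4$, then $x_1:=A_1$, $x_2:=A_3$, $y_1:=A_2$, $y_2:=A_4$ work: $x_1x_2$ and $y_1y_2$ have zero net displacement, so lie in $O_2$, and $x_1x_2y_1y_2=A_1A_3A_2A_4$ is a rearrangement of $A_1A_2A_3A_4=w$. The diagonal pairing $\{v_1,v_3\},\{v_2,v_4\}$ amounts to $Q_1Q_2Q_3Q_4$ being an inscribed parallelogram; the other two pairings would force $Q_1=Q_3$ or $Q_2=Q_4$, i.e.\ a self-intersection of $\Gamma$.

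If $\Gamma$ is not embedded, two of its vertices coincide, giving a proper nonempty closed subarc $B$ of zero net displacement, and $x_1=B$, $x_2=\epsilon$, $y_1=$ (the complementary arc), $y_2=\epsilon$ finish this case. Otherwise $\Gamma$ is embedded. Let $\phi\colon[0,1]\to\R^2$ be its piecewise-linear parametrization and choose the basepoint $p=\phi(0)$ and midpoint $q=\phi(1/2)$ to be two vertices of $\Gamma$ at which $\phi$ is differentiable (``straight-through'' vertices) with $\phi'(0)$ not antiparallel to $\phi'(1/2)$; such a pair exists for every embedded lattice loop with $N\geq 4$ except for a short, explicit list of small loops --- e.g.\ a maximal straight edge of length $\geq 3$ already provides two straight-through vertices of equal direction. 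The exceptional loops (such as the unit square and the $1\times2$ rectangle) are handled by hand, typically by cutting $\Gamma$ at the four endpoints of two opposite maximal straight edges of equal length, which automatically form a parallelogram because those edges have opposite displacement vectors. With $p,q$ fixed, Proposition~\ref{prop:loop} applies and yields $r=\phi(x_0)$, $s=\phi(y_0)$ with $\{r,s\}\neq\{p,q\}$ together with a zero-sum pairing of the four arc-vectors of $\{p,q,r,s\}$.

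The crux is that $r$ and $s$ may be taken to be vertices of $\Gamma$, so that the four arcs are honest subwords. Write $\Gamma_1=\phi([0,1/2])$, $\Gamma_2=\phi([1/2,1])$; since $p,q$ are straight-through vertices no smoothing is needed and these are genuine lattice paths. Because $\Gamma$ is embedded, $f$ has no zero in the open $2$-cells $B,D,F,H,J,L$ of $\mc{X}$ (there $f(x_0,y_0)=0$ would be a self-intersection $\phi(x_0)=\phi(y_0)$ with $x_0\neq y_0$) nor on any open $1$-cell (ruled out the same way), so the zero provided by Proposition~\ref{prop:loop} lies in one of the open $2$-cells $A,C$ --- where the equation reads $r+s=p+q$ --- or $E,G,I,K$ --- where it reads $r-s=\pm(p-q)$, in the notation of Table~\ref{2cells}. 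In the first case $r\in\Gamma_1\cap((p+q)-\Gamma_2)$; in the second, $r$ lies in the intersection of $\Gamma_1$, or $\Gamma_2$, with a lattice translate of itself. In every case $r$ is a common point of two lattice paths, and two lattice paths can meet only at points of $\Z^2$ (a horizontal and a vertical unit edge with integer endpoints cross, if at all, at a point with integer coordinates); since a point of $\Z^2$ on a lattice path is a vertex of it, $r$ is a vertex of $\Gamma$, and so is $s$ (it is obtained from $r$ by a translation, or a reflection, through lattice points, and it lies on $\Gamma$). Finally, $p\neq q$, embeddedness, and $\{r,s\}\neq\{p,q\}$ force each of the two zero-sum pairs of arcs to be nonempty, and since the arc from $q$ to $p$ (respectively the appropriate complementary arc) has length $\geq 1$, neither pair has total length $N$; this yields (1)--(4). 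The points requiring most care are: pinning down exactly which small loops need the direct treatment when choosing $p,q$; and keeping track of which cyclic order of $\{p,q,r,s\}$ each cell of $\mc{X}$ corresponds to, so that the lattice-intersection argument is run on the correct pair of paths.
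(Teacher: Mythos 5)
Your opening move---``since $\perm$ records only the multiset of letters, and a cyclic rotation of $w$ is a permutation of $w$, it suffices to find four vertices $Q_1,Q_2,Q_3,Q_4$ in cyclic order''---is where the proof goes wrong, and everything downstream depends on it. In this grammar $\perm(x_1x_2y_1y_2)$ denotes concatenations of the four \emph{factors} $x_1,x_2,y_1,y_2$ in some order (compare rule (2), where $a,A$ are factors); it is not the set of anagrams. Under the anagram reading the theorem is trivial (pair one $a$ with one $A$, no topology needed), which is a sign the reading is off; under the factor reading, your four cyclic cut points must include the basepoint of the linear word, and---for the intended use in the induction for Theorem~\ref{thm:2D_derivable} via rule (3)---also the junction between $w_1$ and $w_2$, so that $w_1$ and $w_2$ are unions of whole blocks. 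This is exactly why the paper has no freedom in choosing $p$ and $q$: it takes $p=$ start of $w_1$ and $q=$ start of $w_2$, and then must deal with the fact that the lattice path has corners there, first disposing of the case where a letter adjacent to $p$ or $q$ is cancelled by another adjacent letter, and then smoothing the two corners by small circular arcs (and later checking $r,s$ cannot land on the smoothed arcs). Your device of relocating $p,q$ to ``straight-through'' vertices discards precisely the constraint that makes the statement nontrivial and usable. Moreover, even granting that relocation, the claim that such a pair exists for all but ``a short, explicit list of small loops'' is false: width-one staircase loops (e.g.\ the boundary word $aababb\,AABABB$-type examples) have only two straight-through vertices and they are antiparallel, and ``diamond'' loops whose boundary word alternates horizontal and vertical letters at every step (e.g.\ the plus-pentomino boundary $ababAbABABaB$) have no straight-through vertices at all; both families occur at arbitrarily large length, so there is no finite list of exceptions to handle by hand.

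A second, more local problem: ``two lattice paths can meet only at points of $\Z^2$'' is not true---two collinear unit edges can coincide, so the intersection $\Gamma_1\cap\bigl((p+q)-\Gamma_2\bigr)$ (or $\Gamma_1\cap(\Gamma_1+(p-q))$ in the cells $E,G,I,K$) can contain whole edges and hence non-lattice points; e.g.\ for the unit square with $p,q$ opposite corners, $(p+q)-\Gamma_2$ coincides with $\Gamma_1$. This one is repairable, and the repair is exactly the paper's argument: if $r$ lies in the interior of an edge, then the paired relation $r+s=p+q$ (cells $A,C$) or $r-s=\pm(p-q)$ (cells $E,G,I,K$) forces $s$ to lie in the interior of a parallel edge, dividing it in the equal or opposite ratio, and sliding both to the edge endpoints preserves the zero-sum pairing and lands on lattice points. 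Your identification of which cells of $\mathcal{X}$ can contain a zero of $f$ when $K$ is embedded, and the observation that in cells $A,C$ lattice-ness of $r$ forces lattice-ness of $s$, are correct and in the spirit of the paper's case analysis; but to be a proof of Theorem~\ref{thm:induction_argument} you must run this with $p,q$ anchored at the word boundary and the $w_1|w_2$ junction (handling the corners by the paper's smoothing or an equivalent device), not at vertices of your own choosing.
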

     
\begin{proof}
    We consider the set of all possible choices of $x_1,x_2,y_1$, and $y_2$ which satisfy condition $(1),(2),$ and $(3)$ from above. Let's call this set $S$. It is easy to see that $S$ is nonempty and finite. We want to show that at least one of these choices will also satisfy condition $(4)$. 
     
    Recall that each word in $O_2$ and in general any word in $a,b,A,B$ corresponds to a path in the integer lattice in $\R^2$. By abuse of notation, we will denote the path corresponding to $w$ by $w$ as well. Then let $\v(w)$ be the vector in $\R^2$ which goes from the starting point of $w$ to the end point of $w$. We will use the notation $(a,b)$ to mean the point or the vector $\la a,b\ra=a\hat{i}+b\hat{j}$ interchangeably and the meaning should be clear from the context. Thus,
     \[\v(w)=|w|_a\la 1,0\ra + |w|_b\la0,1\ra + |w|_A\la-1,0\ra + |w|_B\la0,-1\ra\]
   
    We will say a point $(a,b)$ is an lattice point if both $a,b\in \Z$. Similarly we say a vector $\vec{v}$ is an lattice vector if both its $\hat{i}$ component and $\hat{j}$ components are integer. Thus the vector $(a,b)$ is an lattice vector if it is an lattice point.

    For each choice in $S$, we observe that for $(x_1,x_2,y_1,y_2)$ in $S$ to satisfy $(4)$ is equivalent to having $\v(x_1)+\v(x_2)$ equal to the zero vector i.e. the origin. Note that by construction, the image under $P$ always gives an lattice vector.
    
    We would like to apply proposition $\ref{prop:loop}$ to the closed lattice path corresponding to $w=w_1w_2$, where the two fixed points $p$ and $q$ are the starting point of $w_1$ (i.e. the ending point of $w_2$ and also, the origin) and the starting point of $w_2$ (which is also the ending point of $w_1$). Unfortunately, the closed lattice path is clearly not differentiable at those two points. We circumvent this problem in the following way.
    
    Firstly note that if either the starting or ending letter in $w_1$ is inverse of either the starting or ending letter in $w_2$, then we can easily find $x_1,x_2x_3,x_4$ as above For example if $w_1=aw_1'b$ and $w_2=bw_2'A$, then $x_1=a, x_2=A, x_3=w_1'b$, and $ x_4=bw_2'$ will work. So we might as well assume that $w_1$ and $w_2$ do not have compatible letters at the starting or ending. This takes care of the condition that the tangents at $p$ and $q$ are not antiprallel.
    
    Let's write $w_1=s_1w_1's_2$ and $w_2=s_3w_2's_4$, where $s_i\in\{a,b,A,B\}$ is a letter. Here we are assuming $|w_1|,|w_2|\geq 2$ since otherwise the proof is trivial. Now let's say $s_1=a$. Then none of $s_2,s_3,$ and $s_4$ can be $A$. Also $b$ and $B$ both can not be in the set $\{s_2,s_3,s_4\}$. So we also assume $\{s_2,s_3,s_4\}=\{a,b\}$. The other cases are similar. Now the two possibilities for $s_4s_1$, i.e. the two letters around the point $p$ are $aa$ or  $ab$. For each of those possibilities $s_2s_3$, the two letter aroud $q$ can be one of the four \[ab,aa,bb,ba\]
    
    \begin{figure}[!ht]
            \centering
            \includegraphics[width=0.75\textwidth]{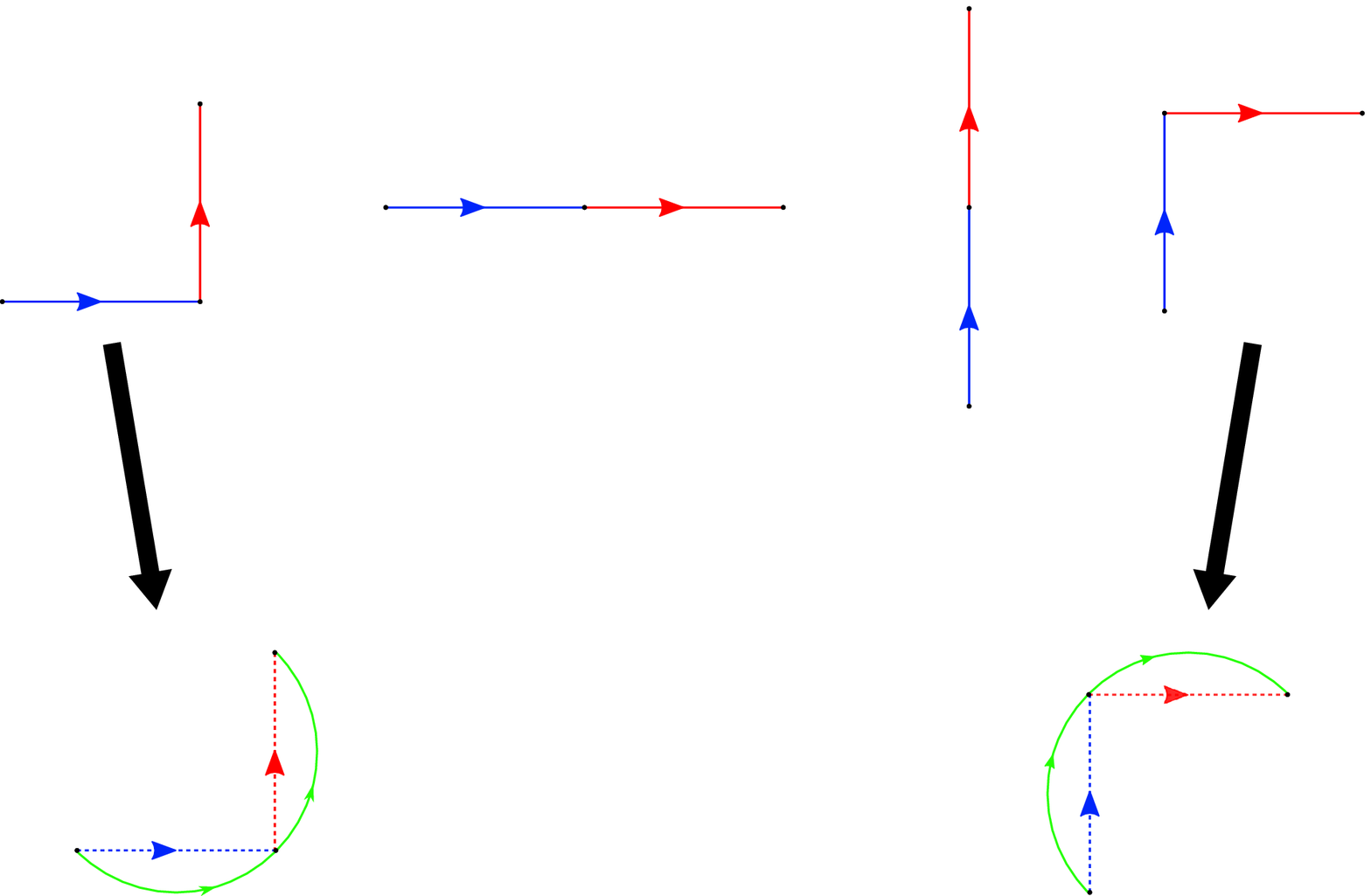}
            \caption{`Smooth'ening the path at $p$ and $q$}
            \label{fig:smoothing}
    \end{figure}

    For any of the above $8$ possibilities, we replace the lattice path corresponding to $ab$ and $ba$ (only at $p$ or $q$)  by a smooth circular arc as in figure \ref{fig:smoothing}, so that smooth path still passes through the lattice points but none of the other points in the arc is an lattice point. If either of $s_4s_1$ and $s_2s_3$ is $aa$ or $bb$ then we leave it as it is. Thus we have changed our closed lattice path corresponding to $w$ into another loop which is differentiable at $p$ and $q$, and so we can now apply prop $\ref{prop:loop}$.
    
    The proposition then gives us the existence of two points $r$ and $s$ which in turns gives us four vectors $\vec{v}_1,\vec{v}_2,\vec{v}_3,$ and $\vec{v}_4$, two of which add up to $0$. It is easy to see that if $r$ and $s$ is actually an lattice point, then all four of these vectors are lattice vectors and we can easily use $\vec{v}_1,\vec{v}_2,\vec{v}_3,\vec{v}_4$ suitably to get a choice of $(x_1,x_2,y_1,y_2)$ in $S$ which satisfies $(4)$. 
    
    \begin{figure}[!ht]
            \centering
            \includegraphics[width=0.75\textwidth]{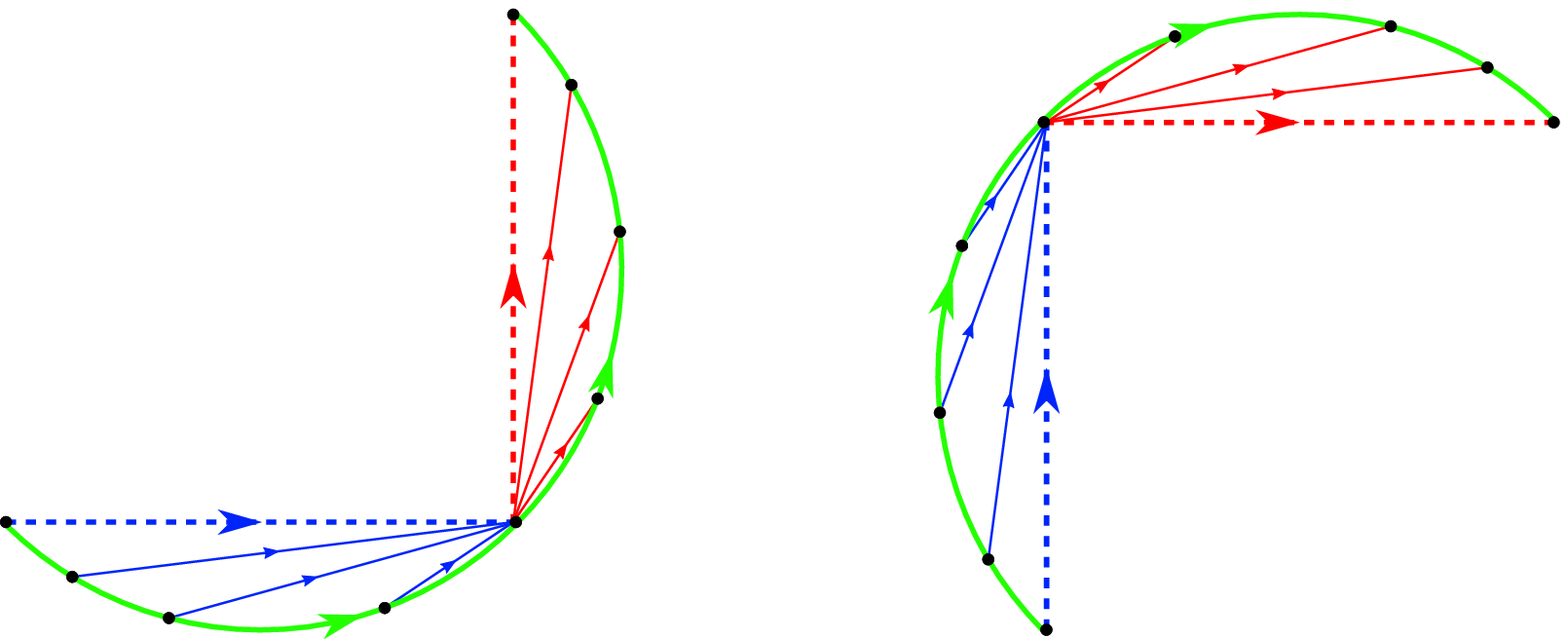}
            \caption{$r$ or $s$ cannot lie in the replaced part}
            \label{fig:samequadrant}
    \end{figure}
    
    Another interesting observation is that $r$ and $s$ cannot lie in the part of the lattice path we replaced in the previous paragraph to `smooth'en the loop at $p$ and $q$. Because otherwise the vectors $\vec{v}_1,\vec{v}_2,\vec{v}_3,\vec{v}_4$ would all lie in the same quadrant and cannot sum up to $0$. (cf figure \ref{fig:samequadrant}). So all we need to prove is that we can find points $r$ and $s$ satisfying proposition $\ref{prop:loop}$ which do no lie in the middle of a horizontal (corresponding to $a$ or $A$) or a vertical (corresponding to $b$ or $B$) segment in the modified closed lattice path.
    
    Contrary to what we have to prove, suppose all choices of $r$ and $s$ satisfying the conclusion of proposition $\ref{prop:loop}$ have at least one of them a non-lattice point. We can assume wlog that $r$ (resp. $s$) lies in the path corresponding to $w_1$(resp. $w_2$). All the other possibilities can be dealt with very similar and easy arguments.
    
    In the first case, for some choice of $r$ and $s$ as above, suppose $r$ in an lattice point and $s$ lies in the interior of a segment.  Since $p$ and $q$ are lattice points by our construction, $\vec{v}_1$ and $\vec{v}_2$ are lattice vectors but $\vec{v}_3$ and $\vec{v}_4$ are not. Then the partition mentioned in proposition $\ref{prop:loop}$ must be $\{\vec{v}_1,\vec{v}_2\}$ and $\{\vec{v}_3,\vec{v}_4\}$ because otherwise they don't add up to $0$(which is an lattice vector). But then we might as well choose $s$ to coincide with $q$ and then both $r$ and $s$ become lattice points and still satisfy prop $\ref{prop:loop}$. Contradiction.
    
    Next suppose both $r$ and $s$ are not lattice points. Again if our choice for partition is $\{\vec{v}_1,\vec{v}_2\}$ and $\{\vec{v}_3,\vec{v}_4\}$ then we can replace $r$ and $s$ with $p$  and $q$ respectively and be done. Suppose $r$ lies in the middle of a horizontal segment corresponding to $a$. Then the only way either $\vec{v}_1+\vec{v}_3=0$ or $\vec{v}_1+\vec{v}_4=0$ is if $s$ also lies in the middle of a horizontal segment, since otherwise $\vec{v}_1+\vec{v}_3$(or $\vec{v}_1+\vec{v}_4$) can not be an lattice point. Moreover, $r$ and $s$ both must divide the segments they belong to in equal or opposite ratios. Then we suitably  replace $r$ and $s$ by the starting or end point of the segments they are in, since doing so adds and subtracts same vector and so the sum remains unchanged (stil equal to zero). This gives us  a choice of  lattice points $r$ and $s$ which also satisfy the conclusion of prop. \ref{prop:loop}. Contradiction.
    
    Thus either way we can always find lattice points $r$ and $s$ and consequently a choice of $(x_1,x_2,y_1,y_2)\in S$ satisfying $(4)$ in theorem $\ref{thm:induction_argument}$.

\end{proof}

\begin{theorem}\label{thm:2D_derivable}
    If $w_1w_2\in O_2$, then $\inv(w_1,w_2)$ is derivable and hence $w_1w_2$ will be in the language $L=\{w|S(w) \text{ is derivable in }G\}$ generated by $G$.
\end{theorem}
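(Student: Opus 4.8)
The plan is to prove Theorem~\ref{thm:2D_derivable} by induction on the length $n=|w_1w_2|$, feeding the decomposition supplied by Theorem~\ref{thm:induction_argument} into the grammar rules of $G$.

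First I would dispose of the base cases, which lie outside the scope of Theorem~\ref{thm:induction_argument}. Since $w_1w_2\in O_2$ forces $|w_1w_2|_a=|w_1w_2|_A$ and $|w_1w_2|_b=|w_1w_2|_B$, the length $n$ is even. If $n=0$ then $w_1=w_2=\epsilon$ and $\inv(\epsilon,\epsilon)$ is derivable by rule (4). If $n=2$ then $w_1w_2$ is a permutation of $aA$ or of $bB$; since $\inv(\epsilon,\epsilon)$ is derivable, applying rule (2) with $x_1=x_2=\epsilon$ and $(t_1,t_2)=(w_1,w_2)$ shows $\inv(w_1,w_2)$ is derivable.

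For the inductive step, assume $n\ge 4$ and that $\inv(u_1,u_2)$ is derivable whenever $u_1u_2\in O_2$ with $|u_1u_2|<n$. Apply Theorem~\ref{thm:induction_argument} to the pair $(w_1,w_2)$ to obtain $x_1,x_2,y_1,y_2$ with $w_1w_2\in\perm(x_1x_2y_1y_2)$, with $|x_1x_2|<n$ and $|y_1y_2|<n$, and with $x_1x_2,y_1y_2\in O_2$. (Note in passing that neither piece is empty: an empty piece would make the other a permutation of $w_1w_2$ and hence of length $n$, contradicting the strict inequality.) By the inductive hypothesis both $\inv(x_1,x_2)$ and $\inv(y_1,y_2)$ are derivable, so rule (3)---applied with $(t_1,t_2)=(w_1,w_2)$, which is legitimate precisely because $w_1w_2\in\perm(x_1x_2y_1y_2)$---produces a derivation of $\inv(w_1,w_2)$. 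This completes the induction, and then rule (1) gives that $S(w_1w_2)$ is derivable, i.e.\ $w_1w_2\in L$.

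I do not expect a genuine obstacle here: the two substantive ingredients, Proposition~\ref{prop:loop} and its consequence Theorem~\ref{thm:induction_argument}, have already been established, and what remains is the routine induction above. The only points that deserve a moment's care are (a) confirming that the length bounds in condition (2) of Theorem~\ref{thm:induction_argument} make the recursion strictly decreasing, so that the induction is well-founded, and (b) checking that the short words with $n\le 2$ are produced directly by rules (2) and (4) rather than through Theorem~\ref{thm:induction_argument}, which cannot apply once both pieces are required to be strictly shorter than a length-$2$ word.
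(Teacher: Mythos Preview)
Your proposal is correct and follows exactly the same approach as the paper: induction on $|w_1w_2|$, with the base cases handled directly by rules (4) and (2), and the inductive step obtained by feeding the decomposition from Theorem~\ref{thm:induction_argument} into rule~(3). Your write-up is in fact more detailed than the paper's, which simply states that the base cases are trivial and that Theorem~\ref{thm:induction_argument} together with rule~(3) gives the induction step.
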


\begin{proof}
    We prove the theorem by inducting on the size of  the word $w_1w_2$. The base cases are trivial and follow from the rules in $G$. Theorem \ref{thm:induction_argument} and the third rule in $G$ forms the induction step.
\end{proof}

\section{The case of $3$ dimension and higher}
We would like to generalize above proof to the case of $O_3$  and conjecture the following.
\begin{conjecture}\label{conj:O_3}
$O_3$ is a $3-$MCFL and is generated by the following grammar. Consider  \[G=(\Omega,\mathcal{A},R,S)\] where $\Omega=\left(\{S;\Circ\},\rho\right)$ with $\rho(S)=1$ and $\rho(\Circ)=3$; $\mathcal{A}=\{a,A,b,B,c,C\}$ and $R$ is made of rules that have one of the following forms:
	\begin{enumerate}
		\item $S(x_1x_2x_3) \lar \Circ(x_1,x_2,x_3)$
		\item $\Circ(t_1,t_2,t_3)\lar  \Circ(x_1,x_2,x_3)$
		
		where $t_1t_2t_3 \in \perm(x_1x_2x_3aA)  \cup \perm(x_1x_2x_3bB) \cup \perm(x_1x_2x_3cC)$
		\item $\Circ(t_1,t_2,t_3)\lar  \Circ(x_1,x_2,x_3),\Circ(y_1,y_2,y_3)$
		
		where $t_1t_2t_3\in \perm(x_1x_2x_3y_1y_2y_3)$
		\item $\Circ(\epsilon,\epsilon,\epsilon)$
	\end{enumerate} 

\end{conjecture}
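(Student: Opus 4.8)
A plausible route to Conjecture~\ref{conj:O_3} is to run the argument of Sections~3 and~4 one dimension higher: replace the loop in $\R^2$ by a loop in $\R^3$, the map $f\colon\mc{X}\to\R^2$ by one into $\R^3$, and the degree calculation in $H_1(S^1)$ by one in $H_2(S^2)$. As in the $O_2$ case, the inclusion $L\subseteq O_3$ is immediate, since every rule of $G$ preserves each of the three balances $|w|_a=|w|_A$, $|w|_b=|w|_B$, $|w|_c=|w|_C$. The content is the reverse inclusion, and it should reduce, by induction on word length (with rules (1), (2), (4) handling the base cases exactly as in the proof of Theorem~\ref{thm:2D_derivable}), to the analogue of Theorem~\ref{thm:induction_argument}: given $w_1w_2w_3\in O_3$ one must produce $x_1,x_2,x_3,y_1,y_2,y_3$ with $w_1w_2w_3\in\perm(x_1x_2x_3y_1y_2y_3)$, with $x_1x_2x_3$ and $y_1y_2y_3$ both in $O_3$, and with both strictly shorter than $w_1w_2w_3$; rule~(3) then supplies the induction step.

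That splitting lemma should in turn follow from the $\R^3$ analogue of Proposition~\ref{prop:loop}. One associates to a word in $O_3$ a closed lattice path in $\Z^3$ (with $a,b,c$ the steps $\pm e_1,\pm e_2,\pm e_3$), marks the three division points $p,q,r$ of $w_1,w_2,w_3$, and claims that, under an appropriate genericity hypothesis on $\phi$ at $p,q,r$, there exist three further points $s,t,u$ on the loop such that the six displacement vectors $\vec{v}_1,\dots,\vec{v}_6$ of the resulting six arcs (which automatically sum to $0$) can be partitioned into two triples, each summing to $0$. Granting this, the lattice-smoothing device used in the proof of Theorem~\ref{thm:induction_argument} --- round the path at $p,q,r$, apply the proposition, and observe that the new points may be taken at lattice vertices since otherwise all the $\vec{v}_i$ would lie in a common octant --- would yield the desired $x_i$ and $y_j$.

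The topological proposition itself would be proved in the same three steps. First build a $3$-dimensional cell complex $\mc{X}$ parametrizing the triples $(s,t,u)$ together with a choice of partition of the six arcs into two triples, stratified by the cyclic order of the six points along the loop. Next, on each cell let $f\colon\mc{X}\to\R^3$ be the sum of the three displacement vectors in one of the two triples, so that a zero of $f$ off the degenerate locus is exactly a configuration satisfying the conclusion. Finally, assuming $f$ is nonzero off its degenerate zeros (the configurations where $\{s,t,u\}$ meets $\{p,q,r\}$), delete small balls around them, normalize to get $f\colon\mc{Y}\to S^2$, and derive a contradiction by exhibiting $2$-cycles in $\d\mc{Y}$ that are null-homologous in $\mc{Y}$ --- because the corresponding families of $3$-cells of $\mc{X}$ bound $3$-spheres --- yet map with nonzero degree to $S^2=H_2(S^2)$.

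I expect two obstacles, one routine and one serious. The routine one is combinatorial explosion: the number of cyclic orders of six marked points and of admissible bipartitions of six arcs dwarfs the twelve $2$-cells of the $O_2$ complex, so $\mc{X}$ and its boundary maps become considerably more involved, and one must again check that the ``extra'' cells (the $\R^3$ counterparts of $F,H,J,L$, where a triple of arcs would have to close up prematurely) are legitimately present precisely because the path is embedded. The serious obstacle is the degree computation at the end. In the plane one exploits that an embedded arc with prescribed endpoints and tangent directions realizes, up to homotopy, only the three pictures of figure~\ref{fig:3homotopycases} --- a Jordan-curve phenomenon with no evident substitute in $\R^3$. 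So the crucial step ``at least one of the bounding cycles maps with nonzero degree to the sphere'' must be replaced by some new invariant of the embedded space loop, and it is not even clear which genericity hypothesis at $p,q,r$ should play the role of ``$\phi'(0)$ not antiparallel to $\phi'(1/2)$''. Settling that point is essentially the whole difficulty of the conjecture.
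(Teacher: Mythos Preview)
The statement is a conjecture in the paper, not a theorem; the paper offers no proof, only a short discussion of why the direct generalization of the $O_2$ argument stalls. Your proposal is likewise an outline with explicitly acknowledged gaps rather than a proof, so on the headline question you and the paper agree: the problem is open.

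Where your diagnosis diverges from the paper's is in locating the main obstruction. You file the construction of the $\R^3$ counterparts of the cells $F,H,J,L$ under ``routine'' combinatorial bookkeeping and place the ``serious'' difficulty at the final degree computation, where the planar Jordan-curve trichotomy of figure~\ref{fig:3homotopycases} has no obvious $3$-dimensional substitute. The paper reports the opposite: it is precisely the $3$-cell analogues of $F$ and $J$ that it finds hard to produce, and it states that \emph{without them, all the relevant $2$-cycles are null-homologous}. Recall that in the planar argument $F$ and $J$ do not themselves encode admissible $2+2$ partitions --- they carry the single-arc displacement $\phi(y)-\phi(x)$, nonzero only because the curve is embedded --- but they are exactly what glues onto $E$ and $I$ to form the spheres producing the $\alpha\beta$- and $\gamma\delta$-cycles in $\Lk(p_1,\mc{X})$. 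If no analogous auxiliary cells can be built one dimension up, there is no nontrivial link cycle on which to run a degree argument at all, and the step you flag as the crux is never reached. So the gap is real, but according to the paper it sits one stage earlier in the program than you suggest.
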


Computer experiments suggest that this is indeed true in the cases where the word length of some $w\in O_3$ is small enough. Note that the main ingredient of the proof in previous section is homology and degree theory which have direct analogues in case of $3$ dimension and higher. As such there is an obvious algorithm to proceed with in those cases. However, it turns out that it's hard to come up with $3-$cells corresponding to the $2-$cells $F$ and $J$ in $3$ dimension and unfortunately without those, all the relevant $2-$cycles are null homologous. Nonetheless, we can make some interesting observations which might prove helpful.
\begin{remark}
      Given $t_1t_2t_3\in O_3$, computer experiments suggest that it is always possible to choose $x_1,x_2,x_3,y_1,y_2,$ and $y_3$ in such a way that $t_1t_2t_3=x_1y_1x_2y_2x_3y_3$ and $x_1x_2x_3, y_1y_2y_3\in O_3$; i.e. it is possible to break the word $t_1t_2t_3$ into six subwords such that the two sets of three \emph{alternate} subwords together make two words of $O_3$. This dramatically cuts down the number of cells we need to consider for finding a relevant $2-$cycle.
\end{remark}
\begin{remark}
      Given $t_1t_2t_3\in O_3$, computer experiments suggest that it is always possible to choose $x_1,x_2,x_3,y_1,y_2,$ and $y_3$ in such a way that $t_1t_2t_3=x_1y_1x_2y_2x_3y_3$ and $x_1x_2x_3, y_1y_2y_3\in O_3$; i.e. it is possible to break the word $t_1t_2t_3$ into six subwords such that the two sets of three \emph{alternate} subwords together make two words of $O_3$. This dramatically cuts down the number of cells we need to consider for finding a relevant $2-$cycle.
\end{remark}

\end{document}